\pgfplotsset{compat=1.10}
\newtheorem{lemma}{Lemma}
\newtheorem{remark}{Remark}
\theoremstyle{definition}
\begin{document}
\title{Rate Region of RIS-Aided URLLC Broadcast Channels: Diagonal versus Beyond Diagonal Globally Passive RIS
}
\author{Mohammad Soleymani, \emph{Member, IEEE},  
Alessio Zappone, \emph{Senior Member, IEEE}, 
Eduard Jorswieck, \emph{Fellow, IEEE}, Marco Di Renzo \emph{Fellow, IEEE}, and 
Ignacio Santamaria, \emph{Senior Member, IEEE}, 
 \\ \thanks{ 
Mohammad Soleymani is with the Signal \& System Theory Group, Universit\"at Paderborn, Germany (email: \protect\url{mohammad.soleymani@uni-paderborn.de}).  

Alessio Zappone is with the Department of Electrical and Information
Engineering, University of Cassino and Southern Lazio, 03043 Cassino, Italy,
and also with CNIT, 43124 Parma, Italy (e-mail: \protect\url{alessio.zappone@unicas.it)}).

Eduard Jorswieck is with the Institute for Communications Technology, Technische Universit\"at Braunschweig, 38106 Braunschweig, Germany
(email: \protect\url{jorswieck@ifn.ing.tu-bs.de}).

Marco Di Renzo is with Université Paris-Saclay, CNRS, CentraleSupélec, Laboratoire des Signaux et Systèmes, 91192 Gif-sur-Yvette, France (email: \protect\url{marco.di-renzo@universite-paris-saclay.fr}).

Ignacio Santamaria is with the Department of Communications Engineering, Universidad de Cantabria, Spain (email: \protect\url{i.santamaria@unican.es}).

The work of A. Zappone was funded by the European Union - NextGenerationEU under the project NRRP RESTART, RESearch and innovation on future Telecommunications systems and networks, to make Italy more smART PE 00000001 (CUP D43C22003080001) [MUR Decree n. 341- 15/03/2022] - Cascade Call launched by SPOKE 3 POLIMI: “SPARKS” project, and project PRIN GARDEN (CUP H53D23000480001).
The work of E. Jorswieck was supported by the Federal Ministry of Education and Research (BMBF, Germany) through the Program of Souver\"an. Digital. Vernetzt. joint Project 6G-RIC, under Grant 16KISK031, and by European Union's (EU's) Horizon Europe project 6G-SENSES under Grant 101139282. The work of M. Di Renzo was supported in part by the European Commission through the Horizon Europe project titled COVER under Grant 101086228, the Horizon Europe project titled UNITE under Grant 101129618, and the Horizon Europe project titled INSTINCT under Grant 101139161, as well as by the Agence Nationale de la Recherche (ANR) through the France 2030 project titled ANR-PEPR Networks of the Future under Grant NF-PERSEUS 22-PEFT-004, and by the CHIST-ERA project titled PASSIONATE under Grants CHIST-ERA-22-WAI-04 and ANR-23-CHR4-0003-01. The work of I. Santamaria was funded by MCIN/AEI 10.13039/501100011033, under Grant PID2022-137099NB-C43 (MADDIE), and by the Horizon Europe project 6G-SENSES under Grant 101139282. 
}}
\maketitle
\begin{abstract}
 We analyze the finite-block-length rate region of wireless systems aided by reconfigurable intelligent surfaces (RISs), employing treating interference as noise. We consider three nearly passive RIS architectures, including locally passive (LP) diagonal (D), globally passive (GP) D, and GP beyond diagonal (BD) RISs. In a GP RIS, the power constraint is applied globally to the whole surface, while some elements may amplify the incident signal locally. The considered RIS architectures provide substantial performance gains compared with systems operating without RIS. GP BD-RIS outperforms, at the price of increasing the complexity, LP and GP D-RIS as it enlarges the feasible set of allowed solutions. However, the gain provided by BD-RIS decreases with the number of RIS elements. Additionally, deploying RISs provides higher gains as the reliability/latency requirement becomes more stringent.
\end{abstract} 
\begin{IEEEkeywords}
Broadcast
channels, finite block length coding,  rate region, reconfigurable intelligent surface.
\end{IEEEkeywords}

\section{Introduction}\label{1}
The reconfigurable intelligent surface (RIS) is a promising technology to improve the spectral and energy efficiencies of wireless communication systems \cite{di2020smart, wu2021intelligent}.  {A key feature of RIS is to operate in a nearly passive manner, without requiring power amplifiers, thus potentially reducing the operational costs, and leading to more environmentally friendly and sustainable communication systems.} 

 {A nearly passive RIS can operate in two possible modes \cite{fotock2023energy}: (i) each RIS element operates in a nearly passive mode, hence the amplitude of the reflection coefficient of each RIS element is no greater than one. This architecture is referred to as locally passive (LP); (ii) the total output power of the RIS is no greater than the total input power. In this case, some RIS elements may amplify the incident signal, while the others attenuate it. Globally, however, the whole RIS does not amplify the incident signal. This architecture is referred to as globally passive (GP).}

As far as the reconfigurability of the RIS is concerned, the typical architecture assumes that each RIS element is programmed via independent electronic circuits. This architecture is referred to as diagonal (D) RIS.  {To further increase the number of optimization variables and hence the wave-domain processing capabilities, the RIS elements may be programmed via a fully-connected network of electronic circuits \cite{li2023beyond}. This architecture is referred to as beyond-diagonal (BD) RIS. A nearly passive BD-RIS is Pareto-optimal from the information-theoretic point of view \cite{BartoliADDR23}. The price to pay is, however, a higher implementation complexity and a higher power consumption, compared with D-RIS \cite{soleymani2024energy}.}

A key requirement for sixth-generation (6G) networks is to enhance the link reliability and to reduce the transmission latency \cite{wang2023road}. RIS has been shown to be a suitable technology for increasing the reliability and, at the same time, reducing the latency of wireless systems \cite{soleymani2023spectral, almekhlafi2021joint, soleymani2023optimization, vu2022intelligent,  soleymani2024optimization}. For instance, the authors of \cite{soleymani2023spectral} have shown that a simultaneously transmitting and reflecting (STAR) RIS improves the latency and reliability in multi-cell broadcast channels (BCs). The authors of \cite{soleymani2023optimization} have proposed an optimization framework for 1-layer rate splitting multiple access in RIS-aided multiple-input single-output (MISO) ultra-reliable and low latency communication (URLLC) systems. The authors of \cite{vu2022intelligent} have shown that RIS improves the performance of a two-user single-antenna BC with non-orthogonal multiple access (NOMA).  {No other papers in the literature have, however, considered GP RISs for URLLC applications.}

In this paper, we aim at analyzing the achievable rate region of an RIS-aided MISO BC, assuming a finite block length (FBL) coding and treating interference as noise (TIN) as decoding strategy. We devise an algorithm for optimizing the rate region for FBL coding, by establishing a link between the rate region and the signal-to-interference-plus-noise ratio (SINR) region, which can be applied if the FBL rate increases monotonically with the SINR. In addition, we compare the average max-min rates of three RIS architectures: LP D-RIS, GP D-RIS, and GP BD-RIS.  {We show that GP BD-RIS outperforms, in general, the other RIS architectures, at the price of increasing the hardware and computational complexities. In addition, the performance gain provided by BD-RIS is shown to decrease as the number of RIS elements increases.}

\textit{Notations:} The trace of a square matrix ${\bf X}$ is denoted as  $\text{Tr}({\bf X})$. The complex Gaussian random variable $x$ with mean $\mu_x$ and variance $\sigma^2_x$ is represented as $\mathcal{CN}(\mu_x,\sigma^2_x)$. The mathematical expectation is denoted as $\mathbb{E}\{\cdot\}$.
The real part of a complex variable $x$ is denoted as $\mathfrak{R}\{x\}$. The $M\times M$ identity matrix is denoted as ${\bf I}_M$. The Hermitian and transpose of vector/matrix ${\bf x}/{\bf X}$ are represented by ${\bf x}^H/{\bf X}^H$ and ${\bf x}^T/{\bf X}^T$, respectively. The big-O notation is represented by $\mathcal{O}$.

\section{System Model}
We consider a single-cell BC with an $N$-antenna base station (BS), serving $K$ single-antenna users, as shown in Fig. \ref{Fig-sys-model}. An $M$-element RIS is considered to aid the transmission between the BS and the users. Global perfect channel state information is assumed. The transmit signal of the BS is ${\bf x}=\sum_{k=1}^K{\bf w}_ks_k\in\mathbb{C}^{N\times 1}$, where $s_k \sim \mathcal{CN}(0,1)$ is the message intended for user $k$, which is one symbol of a codeword of length $n$, and ${\bf w}_k$ is the corresponding beamforming vector. The messages are independent and identically distributed. The power budget of the BS is denoted by $p$. The transmit power of the BS is $\mathbb{E}\{ {\bf x}^H {\bf x} \} = \sum_k {\bf w}_k^H{\bf w}_k= \sum_k \text{Tr}({\bf w}_k{\bf w}_k^H)$. The notation $\{ {\bf w} \}=\{ {{\bf {w}}_k}: \forall k\}$ is used to denote the set of all beamforming vectors.

\subsection{RIS Model}
The direct link between the users and the BS is assumed to be blocked by obstacles. Thus, the equivalent channel between the BS and user $k$ is ${\bf h}_k ( {\bf \Phi} )= {\bf f}_k {\bf \Phi} {\bf F} \in \mathbb{C}^{1\times N}$, where ${\bf F}$ is the channel matrix between the BS and the RIS, ${\bf f}_k$ is the channel vector between the RIS and user $k$, and ${\bf \Phi}$ is the matrix of RIS reflection coefficients. The feasible values of ${\bf \Phi}$ depend on the considered RIS architecture. We analyze three case studies.

\subsubsection{Locally Passive Diagonal RIS}
In this case, ${\bf \Phi}$ is a diagonal matrix, whose diagonal elements have unit modulus ($|\phi_{mm}| = 1$). Hence, the feasible set is
\begin{equation}\label{(3)}
    \mathcal{E}_{LP}=\{\phi_{mn}: |\phi_{mm}| = 1,\phi_{mn} = 0,\forall m\neq n \}.
\end{equation}

\subsubsection{Globally Passive Diagonal RIS}
In this case, ${\bf \Phi}$ is still a diagonal matrix, but the power constraint is applied globally to the whole surface. Specifically, ${\bf \Phi}$ is designed such that the output power is not greater than the input power to the RIS, but some RIS elements may amplify the incident signal \cite{fotock2023energy}. Hence, ${\bf \Phi}$ satisfies the following convex constraint:
\begin{equation}\label{(4)}
    p_{out}-p_{in}=\text{Tr}
    \left( 
    {\bf F}\mathbb{E}\{ {\bf x} {\bf x}^H \}{\bf F}^H
    ({\bf \Phi}^H{\bf \Phi}-{\bf I}_M)\right)\leq 0,
\end{equation}
where $ p_{out}$ and $p_{in}$ are the output and input power to the RIS, respectively. Hence, the feasible set is convex as:
\begin{equation}\label{(5)}
    \mathcal{E}_{D}\!=\!\{\!{\bf \Phi}\!:\!{\bf \Phi}\!=\!\text{diag},\text{Tr}
    \left( 
    {\bf F}\mathbb{E}\{ {\bf x} {\bf x}^H \}{\bf F}^H
    ({\bf \Phi}^H{\bf \Phi}\!-\!{\bf I}_M)\right)\!\leq \!0\! \}\!,\!
\end{equation}
with ``$\text{diag}$'' denoting a diagonal matrix. It is worth noting that the feasible values of ${\bf \Phi}$ depend on the transmission parameters at the BS, i.e., $\{{\bf w}\} $, which is not the case of a LP design.

\subsubsection{Globally Passive Beyond Diagonal RIS}
In a GP BD-RIS, the matrix ${\bf \Phi}$ can be non-diagonal, but needs to satisfy \eqref{(4)}. We assume that ${\bf \Phi}$ is a symmetric matrix, to ease the implementation of BD-RIS \cite{li2023beyond}. Thus, the feasible set of ${\bf \Phi}$ is
\begin{equation}\label{(6)}
    \mathcal{E}_{BD}\!\!=\!\!\{\!{\bf \Phi}\!:\! {\bf \Phi}\!=\!{\bf \Phi}^T\!,\text{Tr}\!
    \left( 
    {\bf F}\mathbb{E}\{ {\bf x} {\bf x}^H \}{\bf F}^H
    ({\bf \Phi}^H{\bf \Phi}-{\bf I}_M)\right)\!\leq\! 0 \}\!.\!\!
\end{equation}
 {Note that $\mathcal{E}_{LP}\subset\mathcal{E}_{D}\subset\mathcal{E}_{BD}$, which means that the optimal solution of GP BD-RIS cannot perform worse than any solutions obtained for LP D-RIS and GP D-RIS.}
\begin{figure}[t!]
    \centering
\includegraphics[width=.4\textwidth]{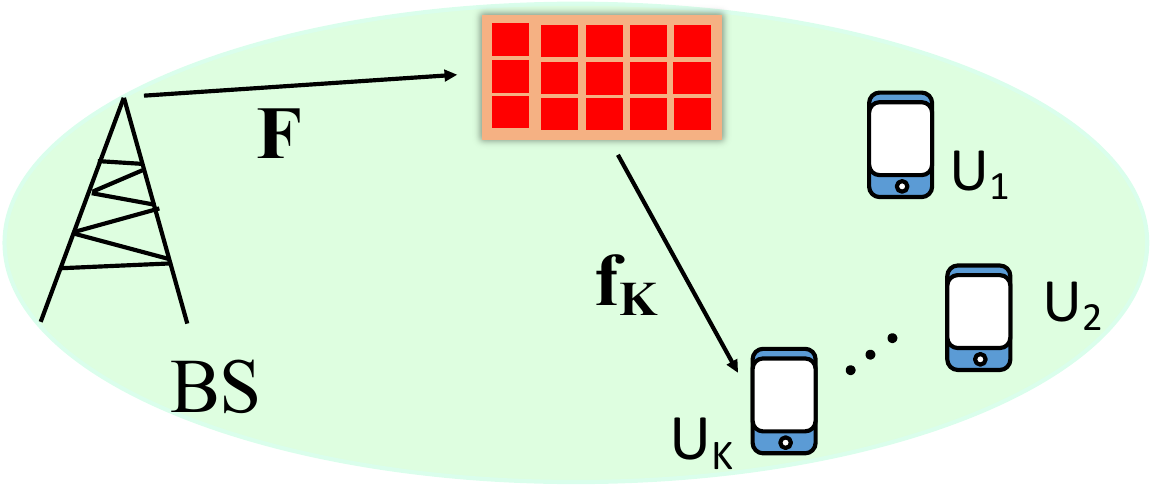} 
     \caption{ {A MISO BC assisted by an RIS.}}
	\label{Fig-sys-model}
\end{figure}

\subsection{Signal Model and Rate Definition}
The signal at user $k$ is $y_k={\bf h}_k ({\bf \Phi})\sum_{i=1}^K{\bf w}_is_i+n_k$,
where  $n_k \sim \mathcal{CN}(0,\sigma^2)$ is the additive noise. Employing TIN,
the rate of user $k$ using the normal approximation is \cite{polyanskiy2010channel, scarlett2016dispersion}
\begin{equation}\label{(8)}
    r_k=\ln (1+\gamma_k)-Q^{-1}(\epsilon)\sqrt{\frac{v_k}{n}},
\end{equation}
where $v_k$ is the channel dispersion of user $k$, $n$ is the packet length, $Q^{-1}$ is the inverse of the Q-function, $\epsilon$ is the maximum tolerable decoding error probability, and $\gamma_k$ is the signal-to-interference-plus-noise ratio (SINR) at user $k$ given by
\begin{equation}\label{(9)}
   \gamma_k=\frac{|{\bf h}_k ({\bf \Phi}){\bf w}_k|^2}{\sigma^2+\sum_{i\neq k}|{\bf h}_k ({\bf \Phi}){\bf w}_i|^2}. 
\end{equation}

An achievable value for the channel dispersion 
of Gaussian signals in interference-limited systems is \cite{scarlett2016dispersion} 
\begin{equation}\label{(10)}
   v_k= 2\frac{\gamma_k}{1+\gamma_k} . 
\end{equation}
\begin{remark}
 {In \eqref{(8)}, $\epsilon$ specifies the reliability requirement, as it denotes the maximum tolerable decoding error probability. Furthermore, the latency is specified by the finite block length $n$, as a more restrictive latency constraint corresponds to a shorter block length \cite{soleymani2024optimization, xu2022max}.}  
\end{remark}
\begin{lemma}\label{lem:Increasing}
The rate function in \eqref{(8)} is increasing in $\gamma_k$ provided that $\gamma_k\geq \bar{\gamma}=\frac{1}{2}(\sqrt{1+2c^{2}}-1)$, with $c=Q^{-1}(\epsilon)/\sqrt{n}$.
\end{lemma}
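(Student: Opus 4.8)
The plan is to treat $r_k$ as a one-variable function of $\gamma_k$, substitute the dispersion expression \eqref{(10)} into \eqref{(8)}, differentiate, and show the derivative is nonnegative exactly on the claimed range. Write $\gamma=\gamma_k$ and $c=Q^{-1}(\epsilon)/\sqrt{n}$, so that $r_k(\gamma)=\ln(1+\gamma)-c\sqrt{2\gamma/(1+\gamma)}$ for $\gamma\geq 0$. The first observation I would record is the convenient identity $2\gamma/(1+\gamma)=2-2/(1+\gamma)$, whose derivative is simply $2/(1+\gamma)^2$; this avoids a messy quotient rule when differentiating the square-root term.

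Next I would compute $r_k'(\gamma)$. Using the identity above and the chain rule, the derivative of the square-root term is $\tfrac{1}{2}\bigl(2\gamma/(1+\gamma)\bigr)^{-1/2}\cdot 2/(1+\gamma)^2 = 1/\bigl((1+\gamma)^{3/2}\sqrt{2\gamma}\bigr)$, so that
\begin{equation}\label{eq:rprime}
r_k'(\gamma)=\frac{1}{1+\gamma}-\frac{c}{(1+\gamma)^{3/2}\sqrt{2\gamma}}.
\end{equation}
Since $1+\gamma>0$, multiplying \eqref{eq:rprime} by $(1+\gamma)$ preserves the sign, giving $(1+\gamma)r_k'(\gamma)=1-c/\sqrt{2\gamma(1+\gamma)}$. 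Hence $r_k'(\gamma)\geq 0$ if and only if $\sqrt{2\gamma(1+\gamma)}\geq c$, i.e. (both sides nonnegative) if and only if $2\gamma(1+\gamma)\geq c^2$, which rearranges to the quadratic inequality $2\gamma^2+2\gamma-c^2\geq 0$.

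Finally I would solve this quadratic: its positive root is $\gamma=\bigl(-1+\sqrt{1+2c^2}\bigr)/2=\bar\gamma$, and because the leading coefficient is positive the inequality $2\gamma^2+2\gamma-c^2\geq 0$ holds precisely for $\gamma\geq\bar\gamma$ (among nonnegative $\gamma$). Therefore $r_k'(\gamma)\geq 0$ for all $\gamma\geq\bar\gamma$, so $r_k$ is increasing on $[\bar\gamma,\infty)$, as claimed. I do not expect any real obstacle here; the only mildly delicate points are (i) being careful that differentiating through \eqref{(10)} contributes the extra square-root term rather than treating $v_k$ as constant, and (ii) noting that squaring the inequality $\sqrt{2\gamma(1+\gamma)}\geq c$ is an equivalence because both sides are nonnegative for $\gamma\geq 0$.
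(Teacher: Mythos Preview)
Your proof is correct and follows essentially the same route as the paper: substitute \eqref{(10)} into \eqref{(8)}, differentiate, reduce the sign condition to $\sqrt{2\gamma(1+\gamma)}\geq c$, and solve the resulting quadratic $2\gamma^2+2\gamma-c^2=0$ for its nonnegative root. The only difference is that you spell out the differentiation via the identity $2\gamma/(1+\gamma)=2-2/(1+\gamma)$, which the paper omits but which leads to the same inequality.
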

\begin{proof}
The rate function in \eqref{(8)} can be written as 
\begin{equation}
    f(\gamma)=\ln (1+\gamma)-c \sqrt{\frac{2\gamma}{(1+\gamma)}},
\end{equation}
where we have suppressed the index $k$ as it is not necessary for this proof. Computing the first-order derivative of $f$ and setting it to be non-negative yields the condition $\sqrt{2\gamma(1+\gamma)}\geq c$,
which, since both $\gamma$ and $c$ are non-negative quantities, is equivalent to the condition $\gamma\geq \bar{\gamma}$, with $\bar{\gamma}$ being the unique non-negative solution of the equation $2\gamma^{2}+2\gamma-c^{2}=0$. Through simple algebra, it is found $\bar{\gamma}=\frac{1}{2}(\sqrt{1+2c^{2}}-1)$.
\end{proof}

\subsection{Problem Statement}
We aim at deriving the achievable rate region of a MISO BC channel assisted by an RIS. The rate region is the union of all rates achievable by users, i.e.,
    $\mathcal{R}= \bigcup\nolimits_{ \{ {\bf w}\in \mathcal{W}, {\bf \Phi} \in \mathcal{E} \} } \{r_k\}_{\forall k}$,
where $\mathcal{W}$ is the feasible set of $\{{\bf w} \}$. 

Employing the rate-profile technique, the achievable rate region is obtained by solving \cite{zhang2010cooperative}
\begin{align}\label{(12)}
    \max\nolimits_{\{ \{{\bf w} \}\in \mathcal{W}, {\bf \Phi}\in \mathcal{E}, r\}} & \; r&\text{s.t.}\,\,&r_k\geq \alpha_kr, \forall k,
\end{align}
varying the coefficients $\alpha_k$ for every $\alpha_k>0$ with $\sum_k\alpha_k=1$.

Additionally, we define the SINR region as the union of all achievable SINRs, which can be obtained by solving
\begin{align}\label{(13)}
    \max\nolimits_{\{\{{\bf w} \}\in \mathcal{W}, {\bf \Phi}\in \mathcal{E}, \gamma\}} & \; \gamma&\text{s.t.}\,\,&\gamma_k\geq \lambda_k\gamma, \forall k,
\end{align}
varying the coefficients $\lambda_k$ for every $\lambda_k>0$ with $\sum_k\lambda_k=1$.

Solving \eqref{(13)} requires a much lower complexity than solving \eqref{(12)}. In the considered scenario, however, and under mild assumptions, the solution obtained by computing the SINR region is equivalent to the solution obtained by computing the rate region,  as we prove in the following lemma. 
\begin{lemma}\label{lem1}
Assume $\gamma_{k}\geq \bar{\gamma}$ for all $k=1,2,\ldots,K$. Then, every point on the boundary of the rate region corresponds to a point on the boundary of the SINR region, and vice versa.
\end{lemma}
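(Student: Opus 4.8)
The plan is to exploit Lemma~\ref{lem:Increasing}: under the hypothesis of the statement, the scalar map $f(\gamma)=\ln(1+\gamma)-c\sqrt{2\gamma/(1+\gamma)}$ is continuous and \emph{strictly} increasing on $[\bar\gamma,\infty)$, with $f(\gamma)\to\infty$ as $\gamma\to\infty$; hence it is a bijection from $[\bar\gamma,\infty)$ onto $[f(\bar\gamma),\infty)$ whose inverse $f^{-1}$ is again continuous and strictly increasing. Since for every feasible pair $(\{{\bf w}\},{\bf \Phi})$ the rate of user $k$ is exactly $r_k=f(\gamma_k)$, the componentwise map $F(\gamma_1,\dots,\gamma_K)=(f(\gamma_1),\dots,f(\gamma_K))$ sends the SINR region $\mathcal{S}$ — restricted, by the hypothesis $\gamma_k\ge\bar\gamma$ for all $k$, to the set $\{\boldsymbol{\gamma}:\gamma_k\ge\bar\gamma\ \forall k\}$ — exactly onto the corresponding portion of the rate region $\mathcal{R}$, i.e.\ $\mathcal{R}=F(\mathcal{S})$, and $F$ is injective there.

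First I would record these monotone-bijection facts and observe that $F$ is therefore a homeomorphism between $\{\boldsymbol{\gamma}:\gamma_k\ge\bar\gamma\ \forall k\}$ and $\{{\bf r}:r_k\ge f(\bar\gamma)\ \forall k\}$ (if a global homeomorphism of $\mathbb{R}^K$ is preferred, extend $f$ below $\bar\gamma$ by any strictly increasing continuous piece). Next I would identify the ``boundary'' in the statement with the Pareto frontier traced out by problems \eqref{(12)} and \eqref{(13)}: a point of $\mathcal{S}$ lies on this frontier iff it is not componentwise dominated by another point of $\mathcal{S}$, and analogously for $\mathcal{R}$.

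The core step is then a short order-preservation argument. For any $\boldsymbol{\gamma},\boldsymbol{\gamma}'$ with all entries $\ge\bar\gamma$, strict monotonicity of $f$ gives $\gamma_k\ge\gamma'_k\iff f(\gamma_k)\ge f(\gamma'_k)$ for each $k$, and injectivity gives $\boldsymbol{\gamma}\ne\boldsymbol{\gamma}'\iff F(\boldsymbol{\gamma})\ne F(\boldsymbol{\gamma}')$. Combined with $\mathcal{R}=F(\mathcal{S})$, this yields that $\boldsymbol{\gamma}^\star$ is Pareto-optimal in $\mathcal{S}$ if and only if $F(\boldsymbol{\gamma}^\star)$ is Pareto-optimal in $\mathcal{R}$, so boundary points map bijectively to boundary points in both directions. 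To make the ``and vice versa'' explicit in the rate-profile language, I would add that a boundary SINR point from \eqref{(13)} with weights $\lambda_k$ and value $\gamma^\star$ (so $\gamma_k=\lambda_k\gamma^\star$) corresponds to the boundary rate point from \eqref{(12)} with $r=\sum_j f(\lambda_j\gamma^\star)$ and $\alpha_k=f(\lambda_k\gamma^\star)/r$, and conversely given a boundary rate point from \eqref{(12)} one recovers the SINR weights by $\gamma_k=f^{-1}(\alpha_k r)$.

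The main obstacle is not any hard computation but the bookkeeping: one must be careful that the hypothesis $\gamma_k\ge\bar\gamma$ is precisely what renders $f$ invertible (on all of $[0,\infty)$ it is not monotone, by Lemma~\ref{lem:Increasing}), that $\mathcal{R}$ is literally the image $F(\mathcal{S})$ so the correspondence is well defined, and that the object parametrized by \eqref{(12)}--\eqref{(13)} is the Pareto frontier rather than the full topological boundary. Once these identifications are in place, strict monotonicity of $f$ does all the work.
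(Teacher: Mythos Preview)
Your proposal is correct and follows essentially the same approach as the paper: both arguments hinge on the strict monotonicity of $f$ on $[\bar\gamma,\infty)$ (Lemma~\ref{lem:Increasing}) to show that componentwise dominance in $\mathcal{S}$ is equivalent to componentwise dominance in $\mathcal{R}=F(\mathcal{S})$, so Pareto-boundary points correspond bijectively. The only cosmetic difference is that the paper phrases this as a short proof by contradiction, whereas you give the equivalent direct order-preservation statement and add the explicit rate-profile weight correspondence.
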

\begin{proof}
To prove the lemma, we use a counter example showing that there cannot exist a point on the boundary of the rate region, which is not on the boundary of the SINR region and vice versa. To this end, assume that $(R_1, R_2,\cdots,R_k)$ is on the boundary of the rate region, but it is not on the boundary of the SINR region. It means that
$(\gamma_1=f^{-1}(R_1),\gamma_2=f^{-1}(R_1),\cdots,\gamma_K=f^{-1}(R_K))$ is feasible, but it is not on the boundary of the SINR region, where $f^{-1}(x)$ is the inverse of the rate function (denoted as $f(x)$ from Lemma \ref{lem:Increasing}. Indeed, Lemma \ref{lem:Increasing} ensures that $f(x)$ is a strictly increasing function whenever $\gamma_{k}\geq \bar{\gamma}$ holds.
It means that there exists at least one SINR $\gamma^\prime_k>\gamma_k$ for $1\leq k\leq K$ such that
$(\gamma_1,\gamma_2,\cdots,\gamma_k^\prime,\cdots,\gamma_K)$ is feasible, implying that the rates $(R_1,R_2,\cdots,f(\gamma_k^\prime) > R_k, \cdots,R_K)$ are achievable. Hence,
$(R_1,R_2, \cdots,R_K)$ cannot be on the boundary of the rate region, which is not correct. Similarly, it can be shown that every point on the boundary of the SINR region is associated
with a point on the boundary of the rate region.
\end{proof}
\begin{remark}
 {It is important to note that the condition $\gamma_k\geq \bar{\gamma}$ is quite realistic. Indeed, it typically holds for all practical operating conditions, as otherwise the SINR would be too low for ensuring a satisfactory system operation \cite[Lemma 2]{soleymani2023spectral}.}
\end{remark}
\begin{remark}
The results in Lemma \ref{lem1} can be extended to any multi-user MISO system using TIN, provided that the rate is a strictly increasing function of the SINR. Thus, the proposed method can be applied not only to the normal approximation, but also other approximations of the rates, as long as they are strictly increasing functions of the SINR.
\end{remark}

\section{Proposed Algorithm}
The optimization problem in \eqref{(13)} is not convex. We obtain a suboptimal solution of \eqref{(13)} by combining the majorization minimization (MM), with the alternating optimization (AO) and the generalized Dinkelbach algorithm (GDA) methods. Specifically, we first solve \eqref{(13)} with respect to $\{{\bf w}\} $, while keeping ${\bf \Phi}$ fixed. Then, we solve \eqref{(13)} with respect to ${\bf \Phi}$, while keeping $\{{\bf w}\} $ fixed. In the following, we first provide the solution for updating ${\bf \Phi}$.

\subsection{Optimization of ${\bf \Phi}$}
The problem in \eqref{(13)} when $\{{\bf w}^{(t)}\} $ is kept fixed is
\begin{subequations}\label{(15)}
\begin{align}\label{(15a)}
    \max_{{\bf \Phi}\in\mathcal{E}, \gamma}\gamma\,\,\,\;\;\text{s.t.}&\,
     \frac{|{\bf h}_k ({\bf \Phi}){\bf w}_k^{(t)}|^2}{\sigma^2+\sum_{i\neq k}|{\bf h}_k ({\bf \Phi}){\bf w}_i^{(t)}|^2}\geq 
     \lambda_k\gamma,\,\,\,\forall k,\\
     &\,\frac{|{\bf h}_k ({\bf \Phi}){\bf w}_k^{(t)}|^2}{\sigma^2+\sum_{i\neq k}|{\bf h}_k ({\bf \Phi}){\bf w}_i^{(t)}|^2}\geq\bar{\gamma},\,\,\, \forall k,\label{(15)b}
\end{align}    
\end{subequations}
where $t$ is the iteration index. The problem in \eqref{(15)}  is a multiple-ratio fractional programming (FP) problem in which the numerator and denominator are quadratic and convex functions of ${\bf \Phi}$. Also, the set $\mathcal{E}$ is not convex for LP D-RIS. 

We first handle the FP problem and then solve \eqref{(15)} for each RIS architecture.
To this end, we use the
convex-concave procedure (CCP) to approximate the numerator of $\gamma_k$ with a linear lower bound since it is a convex function. That is
\begin{multline}\label{(16)}
    |{\bf h}_k ({\bf \Phi}){\bf w}_k^{(t)}|^2\geq \hat{d}_k\left( {\bf \Phi}\right) 
     \triangleq |{\bf h}_k ({\bf \Phi}^{(t-1)}){\bf w}_k^{(t)}|^2
     \\
     \!+\!2\mathfrak{R}\!\!\left\{\! 
     {\bf h}_k ({\bf \Phi}^{(t-1)}){\bf w}_k^{(t)}\!
     \!\left(\!{\bf h}_k ({\bf \Phi}){\bf w}_k^{(t)} \!
     \!\!-\!{\bf h}_k ({\bf \Phi}^{(t-1)} ){\bf w}_k^{(t)}\!
     \right)^*\!
     \right\}\!.\!\!
\end{multline}
Substituting \eqref{(16)} in \eqref{(15)}, we obtain
\begin{subequations}\label{(17)}
\begin{align}\label{(17)a}
     \max_{{\bf \Phi}\in \mathcal{E}, \gamma} \gamma \,\,\,\;\;\text{s.t.}\,&
     \frac{\hat{d}_k\left( {\bf \Phi}\right) }{\sigma^2+\sum_{i\neq k}|{\bf h}_k ({\bf \Phi}){\bf w}_i^{(t)}|^2}\geq \lambda_k\gamma, \forall k,\\
     &\frac{\hat{d}_k\left( {\bf \Phi}\right) }{\sigma^2+\sum_{i\neq k}|{\bf h}_k ({\bf \Phi}){\bf w}_i^{(t)}|^2}\geq \bar{\gamma}, \forall k,\label{(17)b}
\end{align}    
\end{subequations}
which is a non-convex problem, but it can be optimally solved by using the GDA when $ \mathcal{E}$ is a convex set. Indeed, \eqref{(17)b} can be rewritten as the convex constraint
\begin{equation}\label{Eq:NewConstraint}
    \hat{d}_k\left( {\bf \Phi}\right)-\bar{\gamma}(\sigma^2+\sum\nolimits_{i\neq k}|{\bf h}_k ({\bf \Phi}){\bf w}_i^{(t)}|^2)\geq 0, \,\,\,\forall k.
\end{equation}
Thus, we can compute a solution of
\eqref{(17)} when $ \mathcal{E}$ is convex, by iteratively solving the problem
\begin{subequations}\label{(18)}
\begin{align}
     \!\!\!\max_{{\bf \Phi}\in\mathcal{E}}\, &\min\!\!\left\{\!\!
     {\hat{d}_k\!\left( {\bf \Phi}\right) }\!-\!\mu^{(t,m)}\lambda_k({\sigma^2\!+\!\!\sum_{i\neq k}|{\bf h}_k ({\bf \Phi}){\bf w}_i^{(t)}|^2})\!\!
     \right\}\!\\
     &\!\!\!\text{s.t}\;\hat{d}_k\!\left( {\bf \Phi}\right)\!-\!\bar{\gamma}(\sigma^2\!+\!\sum\nolimits_{i\neq k}\!|{\bf h}_k ({\bf \Phi}){\bf w}_i^{(t)}|^2)\geq 0,\;\forall\, k\label{(18b)}
\end{align}
\end{subequations}
and updating $\mu^{(t,m)}$ as
\begin{equation}\label{(19)}
    \mu^{(t,m)}=\min_k\! \left\{\! \frac{\hat{d}_k\left( {\bf \Phi}^{(t,m)}\right) }{\lambda_k\!
    \left(\!\sigma^2\!+\sum_{i\neq k}\left| {\bf h}_k \left( {\bf \Phi}^{(t,m)}\right){\bf w}_i^{(t)} \right|^2
    \right)}\! \right\}\!,\!\!
\end{equation}
where ${\bf \Phi}^{(t,m)}$ is the initial point at the $m$-th iteration of the GDA, 
which is the solution of \eqref{(18)} at the previous iteration.

\subsubsection{GP BD-RIS}
To compute ${\bf \Phi}$ for GP BD-RIS, we iteratively solve the convex problem
\begin{subequations}\label{(20)}
\begin{align}
   \!\!\!  \max_{{\bf \Phi}} \,&\min\!\left\{\!
     {\hat{d}_k\left( {\bf \Phi}\right) }\!-\!\mu^{(t,m)}\lambda_{k}({\sigma^2\!+\!\sum_{i\neq k}|{\bf h}_k ({\bf \Phi}){\bf w}_i^{(t)}|^2})\!
     \right\}
     \\
     \text{s.t.}\,\,&
     \label{(15b)}
     \sum_k\!\text{Tr}\!
    \left( \!
    {\bf F}{\bf w}_k^{(t)}{\bf w}_k^{(t)^H}{\bf F}^H
    ({\bf \Phi}^H{\bf \Phi}\!-\!{\bf I}_M)\!\right)\!\leq\! 0,
     \\
     \label{(15c)}
     &\eqref{Eq:NewConstraint},\;\;{\bf \Phi}={\bf \Phi}^T,
\end{align}    
\end{subequations}
by updating $\mu^{(t,m)}$ according to \eqref{(19)}.

\subsubsection{GP D-RIS}
To compute ${\bf \Phi}$ for GP D-RIS, we solve the convex problem
\begin{subequations}\label{(21)}
\begin{align}
    \!\!\! \max_{{\bf \Phi}}\, &\min\!\left\{\!
     {\hat{d}_k\left( {\bf \Phi}\right) }\!-\!\mu^{(t,m)}\lambda_{k}({\sigma^2\!+\!\sum_{i\neq k}|{\bf h}_k ({\bf \Phi}){\bf w}_i^{(t)}|^2})\!
     \right\}
     \\
     \text{s.t.}\,\,&
     \eqref{Eq:NewConstraint}, \, \eqref{(15b)},\phi_{ij}=0, \,\,\forall i\neq j, 
     \label{(21b)}
\end{align}
\end{subequations}
by updating $\mu^{(t,m)}$ according to \eqref{(19)}.
The algorithm for GP D-RIS is similar to that of GP BD-RIS. The only difference is substituting the symmetry constraint in \eqref{(15c)} with the convex constraint  $\phi_{ij}=0$ for all $i\neq j$ to make ${\bf \Phi}$ diagonal. 
 {Note that the proposed solutions for GP D-RIS and GP BD-RIS converge to a stationary point of \eqref{(13)} since the proposed algorithms for these architectures fulfill the MM convergence conditions \cite{sun2017majorization}.}

\subsubsection{LP D-RIS}
We employ again the lower bound in \eqref{(16)} and  the GDA, which yields
\begin{subequations}\label{(22)}
\begin{align}
    \!\!\! \max_{{\bf \Phi}}\, &\min\!\left\{\!
     {\hat{d}_k\left( {\bf \Phi}\right) }\!-\!\mu^{(t,m)}\lambda_{k}({\sigma^2\!+\!\sum_{i\neq k}|{\bf h}_k ({\bf \Phi}){\bf w}_i^{(t)}|^2})\!
     \right\}
     \\
     \text{s.t.}\,\,&
     \eqref{Eq:NewConstraint},\,|\phi_{ii}|=1,\phi_{ij}=0, \,\,\forall i\neq j,\; 
\end{align}
\end{subequations}
with $\mu^{(t,m)}$ given in \eqref{(19)}.
The problem in \eqref{(22)} is not convex since $|\phi_{ii}|=1$ is not a convex set. To tackle it, we convexify the constraint $|\phi_{ii}|=1$ as detailed in \cite[Sec. IV.B-1]{soleymani2023spectral}.  {The algorithm for LP D-RIS converges since it generates a sequence of non-decreasing values of $\min_k\{\gamma_k\}$ \cite{soleymani2023spectral}.}

\subsection{Optimization of $\{{\bf w} \}$}\label{sec-iii-b}
To compute $\{{\bf w} \}$, we employ the linear lower bound in \eqref{(16)} for the numerator of the SINRs and use the GDA to obtain the global optimal solution of the corresponding surrogate optimization problem. Specifically, we obtain $\{{\bf w} \}$ by solving  
\begin{subequations}\label{(22)}
\begin{align}
     \max_{{\bf w}}\,& \min\!\!\left\{\!\!
     {\tilde{d}_k\left( {\bf w}\right) }\!-\!\mu^{(t,m)}\lambda_{k}({\sigma^2\!+\!\sum_{i\neq k}|{\bf h}_k ({\bf \Phi}^{(t)}){\bf w}_i|^2})\!\!
     \right\}\!\!
     \\
     \text{s.t.}\,\,&
     \,\tilde{d}_k\left( {\bf w}\right)\!-\!\bar{\gamma}(\sigma^2\!+\!\sum_{i\neq k}|{\bf h}_k ({\bf \Phi}){\bf w}_i^{(t)}|^2)\!\geq\! 0, \,\,\, \forall k,
     \\
     &\sum_k \!\|{\bf w}_k\|^{2}\!\leq\! p,
\end{align}
\end{subequations}
where $\mu^{(t,m)}$ is given in \eqref{(19)}, and  
\begin{multline}\label{(23)}
\tilde{d}_k\left( {\bf w}\right) 
     \!\triangleq \!|{\bf h}_k ({\bf \Phi}^{(t-1)}){\bf w}_k^{(t-1)}|^2
     \!+\!2\mathfrak{R}\!\left\{\! 
     {\bf h}_k (\!{\bf \Phi}^{(t-1)}){\bf w}_k^{(t-1)}\!
     \right.
     \\
     \left.
     \left({\bf h}_k (\!{\bf \Phi}^{(t-1)}){\bf w}_k 
     -{\bf h}_k ({\bf \Phi}^{(t-1)} ){\bf w}_k^{(t-1)}
     \right)^*
     \right\}.
\end{multline}

\subsection{Computational Complexity}
We estimate the computational complexity in terms of the number of complex multiplications to calculate ${\bf \Phi}$ for GP BD-RIS and GP D-RIS. A similar analysis can be conducted for LP D-RIS and $\{{\bf w}\}$. To obtain ${\bf \Phi}$, \eqref{(21)} and \eqref{(22)} need to be solved numerically for GP BD-RIS and GP D-RIS, respectively. To this end, we note that the number of Newton iterations to numerically solve a convex problem is proportional to the square root of the number of constraints \cite{boyd2004convex}. 

As for $\mathcal{E}_D$ and $\mathcal{E}_{BD}$, two constraints for the feasible set of ${\bf \Phi}$ exist. In addition, there are $K$ constraints for the SINR of each user. Hence, the number of Newton iterations grows with $\sqrt{K}$. To solve \eqref{(20)} or \eqref{(21)} at each Newton step, $K$ channels need to be computed, which approximately requires $KM^2N$ and $KMN$  multiplications for BD-RIS and D-RIS, respectively.
To compute the objective function in \eqref{(20)} and \eqref{(21)}, the number of multiplications grows with $K^2N$. Additionally, to compute the constraint \eqref{(15b)}, the number of multiplications grows with $KM^2(M+N)$ and $KM^2N$ for GP BD-RIS and GP D-RIS, respectively. Therefore, the computational complexity of one iteration of the GDA for GP BD-RIS and GP D-RIS is of order of $\mathcal{O}\left(K\sqrt{K}M^2(M+N)\right)$ and $\mathcal{O}\left(K\sqrt{K}M^2N\right)$, respectively. In conclusion, the complexity of GP BD-RIS and GP D-RIS increases with $M^3$ and $M^2$, respectively.

\section{Numerical Results}
Monte Carlo simulations are utilized to obtain the numerical results. The channel between the BS and the RIS follows a Rician distribution with Rician factor equal to $10$ \cite[Eqs. (60)-(62)]{soleymani2022improper}. The channels between the RIS and the users are in non-line of sight (NLOS) and follow a Rayleigh distribution. The heights of the BS and RIS are equal to $25$ meters, while the height of the users is $1.5$ meters. The BS is located at the origin, and the distance between the BS and the RIS is $1$ meter. The users are randomly located in a square area with side equal to $20$ meters, and the distance between the center of the square region and the BS is $130$ meters. For comparison, a scenario without the RIS (denoted by {\bf No-RIS}) is considered. In this case, the channels between the BS and the users are NLOS and follow a Rayleigh distribution. The simulation setup is detailed in the captions of each figure.  {Also, a setup where the reflection coefficients of the RIS are randomly distributed (denoted by {\bf RIS-Rand}) is illustrated as well.}

\begin{figure}[t]
    \centering
    \begin{subfigure}[t]{0.24\textwidth}
        \centering
           \includegraphics[width=\textwidth]{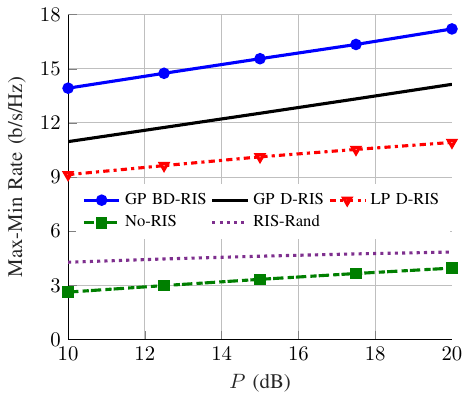}
        \caption{$K=4$.}
    \end{subfigure}
\begin{subfigure}[t]{0.24\textwidth}
        \centering
       \includegraphics[width=\textwidth]{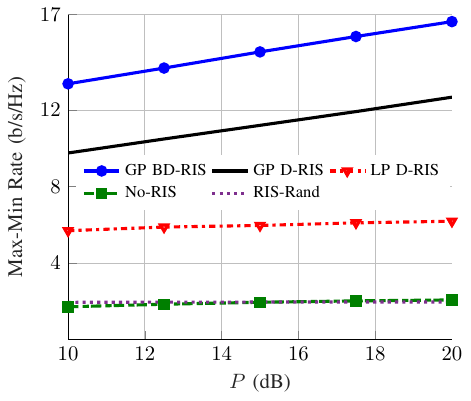}
        \caption{$K=5$.}
    \end{subfigure}%
    \caption{Max-min rate versus $P$ (Setup: $N=6$, $n=256$ bits, $\epsilon=10^{-5}$, $M=20$).}
	\label{Fig1} 
\end{figure}

Fig. \ref{Fig1} shows the average value of the max-min rate as a function of the transmit power $P$. Specifically, we set $\alpha_k=1/K$ for all $k$. 
We see that GP BD-RIS outperforms the other RIS architectures in terms of max-min rate. The price to pay is usually a higher implementation complexity, computational complexity, and reduced energy efficiency \cite{soleymani2024energy}. 
Moreover, when $K=5$, the gap between the LP and GP architectures  (D or BD) is higher. Indeed, in this case, the performance of LP D-RIS is mainly limited by interference, and its max-min rate does not noticeably vary with $P$. However, GP D-RIS and GP BD-RIS can more effectively manage interference than LP D-RIS when $K=5$.
  {As expected, we observe that Rand-RIS does not provide any considerable gain, since its elements are not optimized.}

\begin{figure}[t]
    \centering
    \begin{subfigure}[t]{0.24\textwidth}
        \centering
           \includegraphics[width=\textwidth]{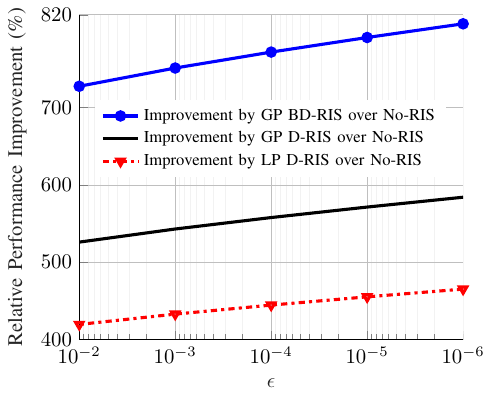}
        \caption{Gain versus $\epsilon$ ($n=256$ bits)}
    \end{subfigure}
\begin{subfigure}[t]{0.24\textwidth}
        \centering
       \includegraphics[width=\textwidth]{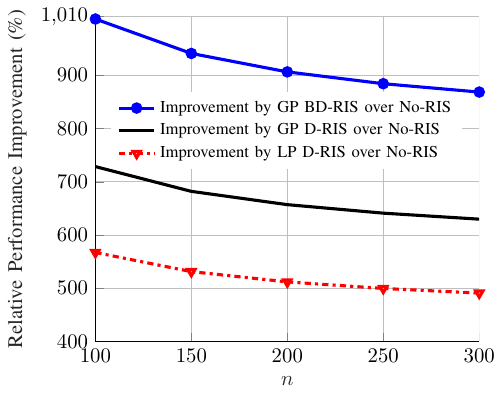}
        \caption{Gain versus $n$ ($\epsilon=10^{-5}$)}
    \end{subfigure}%
    \caption{Gain provided by different RIS architectures (Setup: $N=3$, $K=3$, $M=20$, $P=10$ dB).}
	\label{Fig23} 
\end{figure}

Fig. \ref{Fig23} illustrates the gain provided by different RIS architectures, as compared with No-RIS, in terms of average max-min rate, and as a function of $\epsilon$ and $n$. This figure shows that the considered RIS architectures substantially increase the average max-min rate with respect to the No-RIS deployment.  {Additionally, the gain of each RIS architecture increases when the reliability and latency constraints are more stringent, i.e., when the packets are shorter and when $\epsilon$ is lower.} In addition, the average max-min rate of GP BD-RIS is approximately 33\% and 66\% higher than that of GP D-RIS and LP D-RIS, respectively.

\begin{figure}[t]
    \centering
    \begin{subfigure}[t]{0.24\textwidth}
        \centering
           \includegraphics[width=.85\textwidth]{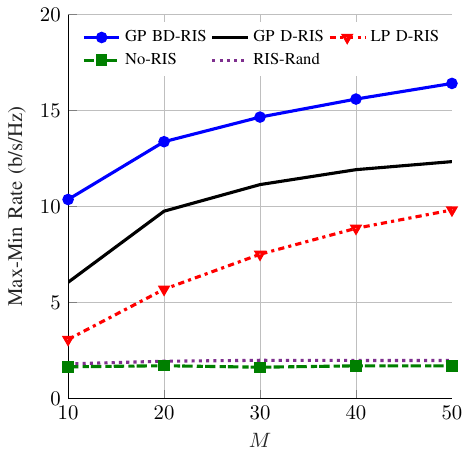}
        \caption{Average max-min rate}
    \end{subfigure}
\begin{subfigure}[t]{0.24\textwidth}
        \centering
       \includegraphics[width=\textwidth]{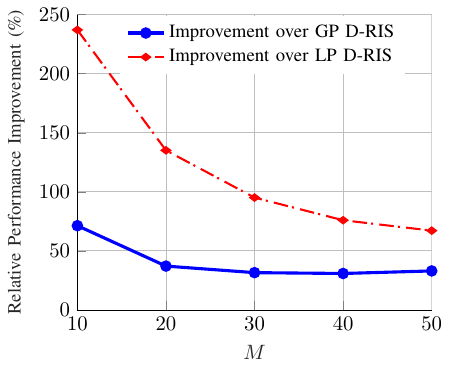}
        \caption{ {Improvement provided by GP BD-RIS}}
    \end{subfigure}%
    \caption{Impact of $M$. (Setup: $N=6$, $K=5$, $n=256$ bits, $\epsilon=10^{-5}$, $P=10$ dB).}
	\label{Fig4} 
\end{figure}

Fig. \ref{Fig4} illustrates the average max-min rate as a function of the number $M$ of RIS elements.  {We see in Fig. \ref{Fig4}a that the average max-min rate of RIS-aided systems increases with $M$ only if the RIS elements are optimized according to the proposed schemes. Interestingly, however, the gain provided by GP BD-RIS over GP D-RIS and LP D-RIS decreases as $M$ increases, as depicted in Fig. \ref{Fig4}b.} Recalling that the number of electronic circuits for D-RIS and fully connected BD-RIS is $M$ and $M(M-1)/2$, respectively, the benefits of BD-RIS needs to account for the implementation complexity and power consumption of the required electronics as well.

\section{Conclusion}
In this paper, we analyzed the rate region of RIS-aided BC channels by considering FBL coding and assuming TIN-based decoding.  {We proved that the rate region can be obtained from the SINR region, provided that the rate increases monotonically with the SINR.}  {We compared three RIS architectures (LP D-RIS, GP D-RIS, and GP BD-RIS), and have shown that  GP BD-RIS, which includes LP D-RIS and GP D-RIS as special cases, provides the highest average max-min rate, at the cost of increasing the implementation and computational complexities.} The gain provided by GP BD-RIS reduces, in addition as the number of RIS elements increases.

\bibliographystyle{IEEEtran}
\bibliography{ref2}

\begin{thebibliography}{10}
\providecommand{\url}[1]{#1}
\csname url@samestyle\endcsname
\providecommand{\newblock}{\relax}
\providecommand{\bibinfo}[2]{#2}
\providecommand{\BIBentrySTDinterwordspacing}{\spaceskip=0pt\relax}
\providecommand{\BIBentryALTinterwordstretchfactor}{4}
\providecommand{\BIBentryALTinterwordspacing}{\spaceskip=\fontdimen2\font plus
\BIBentryALTinterwordstretchfactor\fontdimen3\font minus \fontdimen4\font\relax}
\providecommand{\BIBforeignlanguage}[2]{{%
\expandafter\ifx\csname l@#1\endcsname\relax
\typeout{** WARNING: IEEEtran.bst: No hyphenation pattern has been}%
\typeout{** loaded for the language `#1'. Using the pattern for}%
\typeout{** the default language instead.}%
\else
\language=\csname l@#1\endcsname
\fi
#2}}
\providecommand{\BIBdecl}{\relax}
\BIBdecl

\bibitem{di2020smart}
M.~Di~Renzo, A.~Zappone, M.~Debbah, M.-S. Alouini, C.~Yuen, J.~De~Rosny, and S.~Tretyakov, ``Smart radio environments empowered by reconfigurable intelligent surfaces: How it works, state of research, and the road ahead,'' \emph{IEEE J. Sel. Areas Commun.}, vol.~38, no.~11, pp. 2450--2525, 2020.

\bibitem{wu2021intelligent}
Q.~Wu, S.~Zhang, B.~Zheng, C.~You, and R.~Zhang, ``Intelligent reflecting surface aided wireless communications: A tutorial,'' \emph{IEEE Trans. Commun.}, vol.~69, no.~5, pp. 3313--3351, 2021.

\bibitem{fotock2023energy}
R.~K. Fotock, A.~Zappone, and M.~Di~Renzo, ``Energy efficiency optimization in {RIS}-aided wireless networks: Active versus nearly-passive {RIS} with global reflection constraints,'' \emph{IEEE Trans. Commun.}, vol.~72, no.~1, pp. 257--272, 2024.

\bibitem{li2023beyond}
H.~Li, S.~Shen, and B.~Clerckx, ``Beyond diagonal reconfigurable intelligent surfaces: From transmitting and reflecting modes to single-, group-, and fully-connected architectures,'' \emph{IEEE Trans. Wireless Commun.}, vol.~22, no.~4, pp. 2311--2324, 2023.

\bibitem{BartoliADDR23}
G.~Bartoli, A.~Abrardo, N.~Decarli, D.~Dardari, and M.~Di~Renzo, ``Spatial multiplexing in near field {MIMO} channels with reconfigurable intelligent surfaces,'' \emph{{IET} Signal Process.}, vol.~17, 2023.

\bibitem{soleymani2024energy}
M.~Soleymani \emph{et~al.}, ``Energy efficiency comparison of {RIS} architectures in {MISO} broadcast channels,'' in \emph{IEEE Int. Workshop Signal Process. Adv. Wireless Commun. (SPAWC)}.\hskip 1em plus 0.5em minus 0.4em\relax IEEE, 2024.

\bibitem{wang2023road}
C.-X. Wang \emph{et~al.}, ``On the road to {6G}: Visions, requirements, key technologies and testbeds,'' \emph{IEEE Commun. Surv. Tutor.}, vol.~25, no.~2, pp. 905--974, 2023.

\bibitem{soleymani2023spectral}
M.~Soleymani, I.~Santamaria, and E.~Jorswieck, ``Spectral and energy efficiency maximization of {MISO} {STAR-RIS}-assisted {URLLC} systems,'' \emph{IEEE Access}, vol.~11, pp. 70\,833--70\,852, 2023.

\bibitem{almekhlafi2021joint}
M.~Almekhlafi, M.~A. Arfaoui, M.~Elhattab, C.~Assi, and A.~Ghrayeb, ``Joint resource allocation and phase shift optimization for {RIS}-aided {eMBB/URLLC} traffic multiplexing,'' \emph{IEEE Trans. Commun.}, vol.~70, no.~2, pp. 1304--1319, 2022.

\bibitem{soleymani2023optimization}
M.~Soleymani, I.~Santamaria, E.~Jorswieck, and B.~Clerckx, ``Optimization of rate-splitting multiple access in beyond diagonal {RIS}-assisted {URLLC} systems,'' \emph{IEEE Trans. Wireless Commun.}, vol.~23, no.~5, pp. 5063--5078, 2024.

\bibitem{vu2022intelligent}
T.-H. Vu, T.-V. Nguyen, D.~B. da~Costa, and S.~Kim, ``Intelligent reflecting surface-aided short-packet non-orthogonal multiple access systems,'' \emph{IEEE Trans. Veh. Technol.}, vol.~71, no.~4, pp. 4500--4505, 2022.

\bibitem{soleymani2024optimization}
M.~Soleymani \emph{et~al.}, ``Optimization of the downlink spectral-and energy-efficiency of {RIS}-aided multi-user {URLLC} {MIMO} systems,'' \emph{IEEE Trans. Commun.}, 2024, doi: 10.1109/TCOMM.2024.3480988.

\bibitem{polyanskiy2010channel}
Y.~Polyanskiy, H.~V. Poor, and S.~Verd{\'u}, ``Channel coding rate in the finite blocklength regime,'' \emph{IEEE Trans. Inf. Theory}, vol.~56, no.~5, pp. 2307--2359, 2010.

\bibitem{scarlett2016dispersion}
J.~Scarlett, V.~Y. Tan, and G.~Durisi, ``The dispersion of nearest-neighbor decoding for additive non-{G}aussian channels,'' \emph{IEEE Trans. Inf. Theory}, vol.~63, no.~1, pp. 81--92, 2016.

\bibitem{xu2022max}
Y.~Xu, Y.~Mao, O.~Dizdar, and B.~Clerckx, ``Max-min fairness of rate-splitting multiple access with finite blocklength communications,'' \emph{IEEE Trans. Veh. Technol.}, vol.~72, no.~5, pp. 6816--6821, 2023.

\bibitem{zhang2010cooperative}
R.~Zhang and S.~Cui, ``Cooperative interference management with {MISO} beamforming,'' \emph{IEEE Trans. Signal Process.}, vol.~58, no.~10, pp. 5450--5458, 2010.

\bibitem{sun2017majorization}
Y.~Sun, P.~Babu, and D.~P. Palomar, ``Majorization-minimization algorithms in signal processing, communications, and machine learning,'' \emph{IEEE Trans. Signal Process.}, vol.~65, no.~3, pp. 794--816, 2017.

\bibitem{boyd2004convex}
S.~Boyd and L.~Vandenberghe, \emph{Convex {O}ptimization}.\hskip 1em plus 0.5em minus 0.4em\relax Cambridge University Press, 2004.

\bibitem{soleymani2022improper}
M.~Soleymani, I.~Santamaria, and P.~J. Schreier, ``Improper signaling for multicell {MIMO} {RIS}-assisted broadcast channels with {I/Q} imbalance,'' \emph{IEEE Trans. Green Commun. Netw.}, vol.~6, no.~2, pp. 723--738, 2022.

\end{thebibliography}
\end{document}